\newtheorem{thm}{Theorem}[section]
\newtheorem{lem}[thm]{Lemma}
\title{Stochastic process emerged from lattice fermion systems by repeated measurements and large-time limit}
\author{Kazuki Yamaga\thanks{Department of Nuclear Engineering, Kyoto University, yamaga.kazuki.62a@st.kyoto-u.ac.jp}}
\date{}
\begin{document}
\maketitle

\begin{abstract}
It is known that in quantum theory, measurements may suppress Hamiltonian dynamics of a system. A famous example is the `Quantum Zeno Effect'. This is the phenomena that if one performs the measurements $M$ times asking whether the system is in the same state as the one at the initial time until the fixed measurement time $t$, then survival probability tends to $1$ by taking the limit $M\to\infty$. This is the case for fixed measurement time $t$. It is known that if one takes measurement time infinite at appropriate scaling, `Quantum Zeno Effect' does not occur and the effect of Hamiltonian dynamics emerges \cite{doi:10.1063/1.4978851}. In the present paper, we consider the long time repeated measurements and the dynamics of quantum many body systems in the scaling where the effect of measurements and dynamics are balanced. We show that the stochastic process, called symmetric simple exclusion process (SSEP), is obtained from the repeated and long time measurements of configuration of particles in finite lattice fermion systems. The emerging stochastic process is independent of potential and interaction of the underlying Hamiltonian of the system.
\end{abstract}

\section{Introduction}
It is known that in quantum theory, measurements suppress Hamiltonian dynamics of a system. A famous example is the `Quantum Zeno Effect' \cite{doi:10.1063/1.523304}. This phenomena states that for fixed finite time $t$ if one performs repeated measurements in small interval (taking limit to $0$), then the dynamics of the system freezes. More precisely, suppose the system is initially in the (pure) state $\psi_0$ and evolves by the Hamiltonian $H$. For fixed time $t$, one repeats the 2-outcome measurements in the interval $t/M$ asking if the system is in the state `$\psi_0$' or not, which is described by the PVM $\{|\psi_0\rangle\langle\psi_0|,1-|\psi_0\rangle\langle\psi_0|\}$: repeat the following processes until $t$,
\[ \mathrm{dynamics\ :\ }\psi\mapsto e^{-i\frac{t}{M} H}\psi \]
\[ \mathrm{measurement\ :\ }''\psi_0''\ \mathrm{or\ not?}. \]
The probability getting the outcome `$\psi_0$' in all the measurements tends to $1$ in the limit $M\to\infty$ (small interval limit). That is, the dynamics of the system is frozen by the continuous measurement. This is the `Quantum Zeno Effect'.

However, this is the case of finite measurement time $t$. It is known that if one takes measurement time infinite at appropriate scaling, `Quantum Zeno Effect' does not occur and the effect of the Hamiltonian dynamics emerges \cite{doi:10.1063/1.4978851}. In this paper, we will consider the case that suppression by repeated measurements and Hamiltonian dynamics are balanced. In the setting of 'Quantum Zeno Effect', one usually considers the 2-outcome measurement $\{|\psi_0\rangle\langle\psi_0|,1-|\psi_0\rangle\langle\psi_0|\}$ asking whether the system is in the state `$\psi_0$' or not. If one deals with more complicated outcome space such as configuration of particles, physically meaningful stochastic processes may be obtained. 

Here, we consider the measurement of configuration of particles in finite lattice fermion systems. We deal with the Hamiltonian consisting of hopping term, potential term and 2-body interaction term. For fixed $\tau$, one repeats the measurement of configuration until $\tau M$ in the interval $1/M$. That is, the number of measurements is $[\tau M^2]$, the maximum integer which does not exceed $\tau M^2$. Taking the limit $M\to\infty$, we obtain a stochastic process with a new time $\tau$. This process corresponds to the symmetric simple exclusive process (SSEP) and is independent of potential and interaction terms of the Hamiltonian. It is known that the diffusion equation is obtained from the SSEP by the appropriate scaling limit called hydrodynamic limit \cite{kipnis1998scaling, guo1988}. By the diffusion equation, the diffusion of particles is proportional to $\sqrt{\tau}$. If the measurements are not performed, generally the transport property of quantum many body systems should be influenced by the potential \cite{abdul2017localization,damanik2015quantum, damanik2015transport,kachkovskiy2016transport}. For example, if the potential is random the system shows the localization (Anderson localization) \cite{germinet1998dynamical,carmona1987,PhysRev.109.1492} and if the potential is periodic the system shows the ballistic transport \cite{bruneau2016conductance}. Thus, our result suggests that the effect of repeated measurements promotes the diffusion for the random potentials and suppresses the transport of particles for the periodic potentials. 

\section{Lattice fermion system on the circle}
In this section, we recall the description of many body fermion systems on lattice. Here consider the one-dimensional finite lattice $\{1,2,\cdots,N\}$. Many body fermion systems on this lattice is described by creation and annihilation operators $a_n^*,a_n\ (n=1,2,\cdots,N)$ satisfying the following canonical anti-commutation relations: 
\[ \{a_n,a_m\}=0,\ \{a_n^*,a_m\}=\delta_{nm}1. \]
These operators act on the fermion Fock space ($2^N$ dimension) consisting of one-particle Hilbert space $\mathbb{C}^N$. In this paper, we consider the following form of Hamiltonian
\[ H=\sum_{n=1}^N\left[-\frac{1}{2}(a_n^*a_{n+1}+a_{n+1}^*a_n)+v(n)a_n^*a_n+\lambda a^*_{n}a_{n}a^*_{n+1}a_{n+1}\right], \]
where $v\colon\{1,\cdots,N\}\to\mathbb{R}$ is a real valued function called potential and $\lambda\in\mathbb{R}$ is a parameter representing the strengthen of interaction. $-\frac{1}{2}(a_n^*a_{n+1}+a_{n+1}^*a_n),\ v(n)a_n^*a_n$ and $\lambda a^*_{n}a_{n}a^*_{n+1}a_{n+1}$ represent hopping, potential and interaction terms respectively. We consider the periodic boundary condition and identify $a_{N+1}$ as $a_1$.

Put $A_n^0=a_na_n^*,\ A_n^1=a_n^*a_n$ ($n=1,\cdots,N$), then from canonical anti-commutation relations it turns out that they are projections commuting each other. For a configuration of particles $x=(x_1,x_2,\cdots,x_N)\in\{0,1\}^N$ ($0,1$ correspond to the absence and the existence of a particle respectively, and $x_n$ represents whether a particle is in the site $n$ or not), put
\[ P_{(x_1,\cdots,x_N)}=\prod_{n=1}^NA_n^{x_n}. \]
Then they are projections and satisfy
\[ \sum_{x\in\{0,1\}^N}P_x=I. \]
That is $\{P_x\}_{x\in\{0,1\}^N}$ is a PVM (projection-valued measure) representing the measurement of configuration of particles. Since $P_x\neq0$ and the number of outcomes is equal to the dimension of the Hilbert space they are 1-rank projections. In this paper, we consider only projection measurement, that is, if one performs the measurement $\{P_x\}_{x\in\{0,1\}^N}$ on the system in the state $\rho$ and obtain the outcome $x$, then the state after the measurement is 
\[ \frac{P_x\rho P_x}{\mathrm{Tr}\rho P_x} .\]

\section{SSEP from repeated measurement}
First, let us consider the Hamiltonian without potential and interaction terms:
\[ H=-\frac{1}{2}\sum_{n=1}^N(a_n^*a_{n+1}+a_{n+1}^*a_n), \]
$a_{N+1}=a_1$ (periodic boundary condition). The system evolves by this Hamiltonian.

Suppose that we repeat the measurements of configuration on the system initially in the state $\rho$ (we identify the density operator $\rho$ and the expectation value functional $A\mapsto\mathrm{Tr}\rho A$, and use the same symbol) until $T$ with interval $t$. Put $L\in\mathbb{N},\ s\in\mathbb{R}$ which satisfies $T=Lt+s\ (L\in\mathbb{N},0\le s<t)$. Then, the probability $p_T^t(x)$ of getting the outcome $x$ by the configuration measurement at time $T$ is 
\[
 p_T^t(x)=\sum_{x_L\in\{0,1\}^N}\cdots\sum_{x_1\in\{0,1\}^N}\omega_{x_L}(e^{isH}P_xe^{-isH})\omega_{x_{L-1}}(e^{itH}P_{x_L}e^{-itH})\cdots\omega_{x_1}(e^{itH}P_{x_2}e^{-itH})\rho(e^{itH}P_{x_1}e^{-itH}), 
\]
where $\omega_x$ is the state which has the density operator $P_x$. Put
\[ p_0^t(x)=\rho(e^{itH}P_xe^{-itH}) \]
and define a $2^N\times 2^N$ matrix $U_t$ with $(x,y)$-entry
\[ \omega_y(e^{itH}P_xe^{-itH}). \]
Since
\[ \sum_{x\in\{0,1\}^N}\omega_y(e^{itH}P_xe^{-itH})=\sum_{y\in\{0,1\}^N}\omega_y(e^{itH}P_xe^{-itH})=1, \] 
$U_t$ is a doubly stochastic matrix. With $U_t$, the probability distribution $p_T^t$ is expressed as 
\[ p_T^t=U_s(U_t)^{L-1}p_0^t. \]
Let us make the measurement interval $t$ small and the measurement time $T$ large. Fix $\tau>0$ and let $M$ be a positive integer. Put $t=\frac{1}{M}$ and $T=\tau M$ and take the limit $M\to\infty$.

Here, we state our main result as a theorem.

\begin{thm}
$p_{\tau M}^{\frac{1}{M}}$ converges to a probability distribution $q_\tau$ on $\{0,1\}^N$ by the limit $M\to\infty$. This distribution corresponds to that of symmetric simple exclusion process (SSEP) initially in the distribution $\{\rho(P_x)\}_{x\in\{0,1\}^N}$.
\end{thm}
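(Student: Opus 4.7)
The plan is to Taylor-expand the one-step transition matrix $U_t$ around $t=0$, identify the leading nonzero dynamical contribution as the generator $Q$ of a symmetric continuous-time Markov chain on $\{0,1\}^N$, and then pass to the limit $t=1/M$ with $L=[\tau M^2]$ iterations by a standard matrix-exponential estimate. Concretely, using the Hadamard series $e^{itH}P_x e^{-itH} = \sum_{k\ge 0} \frac{(it)^k}{k!}\,\mathrm{ad}_H^k(P_x)$ and expanding $(U_t)_{x,y} = \mathrm{Tr}(P_y e^{itH}P_x e^{-itH})$, the zeroth-order term gives $\delta_{xy}$. The crucial observation is that the first-order term vanishes identically: since each $P_x = |x\rangle\langle x|$ is rank-one and the family is mutually orthogonal ($P_yP_x = \delta_{xy}P_x$), both $\mathrm{Tr}(P_y H P_x)$ and $\mathrm{Tr}(P_y P_x H)$ collapse to $\delta_{xy}\langle x|H|x\rangle$. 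The second-order term contributes $t^2 Q_{xy}$ with
\[ Q_{xy} = |\langle x|H|y\rangle|^2\ (x\neq y), \qquad Q_{xx} = -\sum_{y\neq x}|\langle x|H|y\rangle|^2, \]
while the tail $k\ge 3$ is uniformly $O(t^3)$ via the bound $\|\mathrm{ad}_H^k(P_x)\|\le (2\|H\|)^k$. Thus $U_t = I + t^2 Q + O(t^3)$.

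Next I would identify $Q$ as the generator of SSEP on $\mathbb{Z}/N\mathbb{Z}$. By construction $Q$ is symmetric with non-negative off-diagonal entries and zero row (and column) sums, so it is a symmetric Markov generator, consistent with $U_t$ being doubly stochastic. A direct computation using the canonical anticommutation relations shows that $\langle x|H|y\rangle$ is nonzero precisely when $y$ is obtained from $x$ by swapping the occupations at a single nearest-neighbor pair $(n,n+1)$ with $\{x_n,x_{n+1}\}=\{0,1\}$, and then $|\langle x|H|y\rangle|=1/2$, so $Q_{xy}=1/4$. This is exactly the generator of SSEP on the discrete circle with rate $1/4$ per allowed bond swap.

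Finally, I would decompose $p_{\tau M}^{1/M} = U_s (U_{1/M})^{L-1}p_0^{1/M}$ with $L=[\tau M^2]$ and $s\in[0,1/M)$. Because $L M^{-2}\to\tau$ and $L M^{-3}\to 0$, a routine Trotter-type argument (taking logarithms on a neighborhood of $I$) yields $(U_{1/M})^L \to e^{\tau Q}$; in addition $U_s\to I$ since $s<1/M$, and $p_0^{1/M}(x)\to\rho(P_x)$ by continuity of $t\mapsto e^{itH}$, so $p_{\tau M}^{1/M}\to e^{\tau Q}q_0 =: q_\tau$ with $q_0(x)=\rho(P_x)$, which is the SSEP law at time $\tau$ started from $q_0$. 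The main obstacle I expect is the uniform vanishing of the first-order term in $t$: this cancellation is precisely what sets the diffusive time scale $t^2$, justifies the balanced scaling $L\sim\tau M^2$, and prevents either freezing (Zeno) or divergence in the limit. A secondary but routine issue is tracking fermion signs in $\langle x|H|y\rangle$, though only the modulus enters $Q$.
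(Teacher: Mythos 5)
Your proposal is correct and follows the same overall skeleton as the paper: expand $(U_t)_{xy}$ to second order in $t$, note that the first-order term vanishes because the rank-one projections $P_x$ are mutually orthogonal (the paper phrases this as $\omega_y([H,P_x])=\mathrm{Tr}(P_xP_yH-P_yP_xH)=0$), and then pass from $U_{1/M}=I+M^{-2}Q+O(M^{-3})$ raised to the power $[\tau M^2]-1$ to $e^{\tau Q}$. For that last step the paper proves a dedicated lemma, $(1+\frac{X}{K}+Y_K)^K\to e^X$ when $K\|Y_K\|\to0$, applied with $K=M^2$, plus a separate estimate for the exponent offset $\tau M^2-[\tau M^2]+1$; your matrix-logarithm/Trotter argument with $LM^{-2}\to\tau$, $LM^{-3}\to0$ accomplishes the same thing in one stroke and is unobjectionable in finite dimension. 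Where you genuinely differ is in identifying the generator: the paper evaluates $-\frac{1}{2}\omega_y([H,[H,P_x]])$ by explicit commutator algebra with the creation and annihilation operators, and must then check by a second computation that the potential and interaction contributions to the double commutator have vanishing expectation in $\omega_y$; you instead exploit rank-one-ness to get the closed (Fermi-golden-rule-type) formula $Q_{xy}=|\langle x|H|y\rangle|^2$ for $x\neq y$, with the diagonal fixed by the zero row/column sums. This is cleaner and yields the independence from $v$ and $\lambda$ for free, since those terms are diagonal in the configuration basis and never enter off-diagonal matrix elements; you should say this explicitly (and restrict your ``nonzero precisely when'' claim to $x\neq y$, since $\langle x|H|x\rangle$ need not vanish for the full Hamiltonian), so that the theorem is established for the Hamiltonian with potential and interaction and not only the pure hopping case. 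With $|\langle x|H|x^{n\leftrightarrow n+1}\rangle|=\frac{1}{2}$ on bonds with $x_n\neq x_{n+1}$, your rate $Q_{xy}=\frac{1}{4}$ matches the paper's SSEP generator, and your signs are the standard ones for a Markov generator.
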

In the following, we provide the proof of this result step by step.

For $a\in\mathbb{R}$, let us denote $[a]$ the maximum integer which does not exceed $a$. Then $L=[\tau M^2]$. $U_s$ ($0\le s<\frac{1}{M}$) in the right hand side of  
\[ p_{\tau M}^{\frac{1}{M}}=U_s\left(U_{\frac{1}{M}}\right)^{[\tau M^2]-1}p_0^{\frac{1}{M}} \]
converges to the identity operator and $p_0^{\frac{1}{M}}(x)$ to $p_0^0(x)=\rho(P_x)$ as the limit $M\to\infty$. Let us focus on the factor
\[ \left(U_{\frac{1}{M}}\right)^{[\tau M^2]-1}. \]
Expanding $\omega_y(e^{itH}P_xe^{-itH})$ in terms of $t$, we have
\[ \omega_y(e^{itH}P_xe^{-itH})=\delta_{xy}+it\omega_y([H,P_x])-\frac{t^2}{2}\omega_y([H,[H,P_x]])+O(t^3) , \]
where $[A,B]=AB-BA$. Since $[P_x,P_y]=0$, the second term is $0$:
\[ \omega_y([H,P_x])=\mathrm{Tr}(P_xP_yH-P_yP_xH)=0. \]
Defining a $2^N\times 2^N$ matrix $X$ as
\[ (X)_{xy}=-\frac{1}{2}\omega_y([H,[H,P_x]]) ,\]
then we get
\begin{equation}
\lim_{M\to\infty}\left(U_{\frac{1}{M}}\right)^{[\tau M^2]-1}=e^{\tau X}. \end{equation}
In order to prove this fact, we prepare a lemma. 

\begin{lem}
Let $V$ be a Banach space and $X$ be a bounded operator on $V$. And suppose $\{Y_K\}_{K\in\mathbb{N}}$ is a sequence of bounded operators on $V$ such that $K\|Y_K\|\to0\ (K\to\infty)$ $(\|Y_K\|$ is a operator norm of $Y_K)$. Then we obtain
\[ \lim_{K\to\infty}\left(1+\frac{X}{K}+Y_K\right)^K=e^X, \]
in the operator norm.
\end{lem}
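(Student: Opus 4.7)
The plan is to reduce to the standard fact $(1+X/K)^K\to e^X$ (valid in any Banach algebra) by showing that the extra term $Y_K$ contributes negligibly under the hypothesis $K\|Y_K\|\to 0$. First I would set $A_K=1+\frac{X}{K}+Y_K$ and $B_K=1+\frac{X}{K}$, so that $A_K-B_K=Y_K$, and I would rewrite the difference $A_K^K-B_K^K$ using the standard telescoping identity
\[
A_K^K-B_K^K=\sum_{j=0}^{K-1}A_K^{K-1-j}(A_K-B_K)B_K^{j}=\sum_{j=0}^{K-1}A_K^{K-1-j}Y_K B_K^{j}.
\]

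Next I would obtain uniform-in-$K$ bounds on the relevant powers. Submultiplicativity of the operator norm and the elementary inequality $(1+a/K)^K\le e^a$ give
\[
\|B_K\|^j\le\Bigl(1+\tfrac{\|X\|}{K}\Bigr)^{K}\le e^{\|X\|},\qquad \|A_K\|^{K-1-j}\le\Bigl(1+\tfrac{\|X\|}{K}+\|Y_K\|\Bigr)^{K}\le e^{\|X\|+K\|Y_K\|},
\]
and since $K\|Y_K\|\to 0$, both quantities are bounded by a constant $C$ independent of $K$ and $j$ (for $K$ large). Taking norms in the telescoped expression therefore yields
\[
\|A_K^K-B_K^K\|\le\sum_{j=0}^{K-1}\|A_K\|^{K-1-j}\,\|Y_K\|\,\|B_K\|^{j}\le C^{2}\cdot K\|Y_K\|\xrightarrow{K\to\infty}0.
\]

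Finally, combining this with the classical limit $B_K^K=\bigl(1+\tfrac{X}{K}\bigr)^K\to e^X$ (proved in a Banach algebra by expanding the finite binomial series and comparing with the power series of $e^X$ term-by-term, using absolute convergence) delivers $A_K^K\to e^X$ in operator norm. The only delicate point is the uniform bound on $\|A_K\|^{K-1-j}$: one needs the hypothesis $K\|Y_K\|\to 0$ rather than merely $\|Y_K\|\to 0$, since otherwise the exponential $e^{K\|Y_K\|}$ would blow up and the telescoping estimate $K\|Y_K\|\cdot C$ would fail to vanish. Once that is in place, the rest is routine.
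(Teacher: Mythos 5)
Your proof is correct. It shares the paper's overall architecture --- first establish the classical limit $\left(1+\frac{X}{K}\right)^K\to e^X$ by comparing the finite binomial sum with the exponential series, then show the perturbation $Y_K$ is negligible --- but the second step is handled by a genuinely different decomposition. The paper expands $\left(1+\frac{X}{K}+Y_K\right)^K$ fully in the (noncommutative) powers of $Y_K$ and bounds the difference by
\[
\left(1+\tfrac{\|X\|}{K}\right)^K\left(e^{K\|Y_K\|}-1\right),
\]
which tends to $0$ because $K\|Y_K\|\to0$. You instead use the telescoping identity $A_K^K-B_K^K=\sum_{j=0}^{K-1}A_K^{K-1-j}(A_K-B_K)B_K^{j}$ with $A_K=1+\frac{X}{K}+Y_K$, $B_K=1+\frac{X}{K}$, which needs only a first-order count of the $Y_K$ insertions and yields the cleaner linear bound $C^2\,K\|Y_K\|$. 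The two estimates are asymptotically equivalent (since $e^{s}-1\sim s$), and both hinge on the uniform bounds $\left(1+\tfrac{\|X\|}{K}\right)^K\le e^{\|X\|}$ and $\left(1+\tfrac{\|X\|}{K}+\|Y_K\|\right)^K\le e^{\|X\|+K\|Y_K\|}$; your approach buys a slightly shorter and more transparent estimate, while the paper's expansion makes explicit how the higher powers of $Y_K$ organize into the exponential remainder. Your closing remark correctly identifies where the hypothesis $K\|Y_K\|\to0$ (and not merely $\|Y_K\|\to0$) is indispensable, which matches the role it plays in the paper's bound as well.
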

\begin{proof}
The proof consists of two parts. First, we show the relation which is well-known for the case that $X$ is a number, 
\[ \lim_{K\to\infty}\left(1+\frac{X}{K}\right)^K=e^X .\]
Recall that
\[ \left(1+\frac{X}{K}\right)^K=\sum_{n=0}^K
\left(
\begin{array}{c}
K \\
n
\end{array}
\right)\frac{X^n}{K^n}=\sum_{n=0}^K\frac{X^n}{n!}\frac{K(K-1)\cdots(K-n+1)}{K^n} ,\]
\[ e^X=\sum_{n=0}^\infty\frac{X^n}{n!}. \]
For $\epsilon>0$, there exists a positive integer $K_0\in\mathbb{N}$ such that $\displaystyle\sum_{n=K_0+1}^\infty\frac{\|X\|^n}{n!}<\frac{\epsilon}{3}$. For $K>K_0$ by the inequality
\begin{eqnarray*}
\left\|\left(1+\frac{X}{K}\right)^K-e^X\right\|&\le&\left\|\sum_{n=0}^{K_0}\frac{X^n}{n!}\frac{K(K-1)\cdots(K-n+1)}{K^n}-\sum_{n=0}^{K_0}\frac{X^n}{n!} \right\| \\
&&+\left\|\sum_{n=K_0+1}^K\frac{X^n}{n!}\frac{K(K-1)\cdots(K-n+1)}{K^n}\right\|+\left\|\sum_{n=K_0+1}^\infty\frac{X^n}{n!}\right\| \\
&\le&\left\|\sum_{n=0}^{K_0}\frac{X^n}{n!}\frac{K(K-1)\cdots(K-n+1)}{K^n}-\sum_{n=0}^{K_0}\frac{X^n}{n!}\right\|+2\sum_{n=K_0+1}^\infty\frac{\|X\|^n}{n!},
\end{eqnarray*}
for sufficiently large $K$, the first term of the right hand side is smaller than $\frac{\epsilon}{3}$ and the right hand side is bounded from above by $\epsilon$.

Next, we will show that
\[ \lim_{K\to\infty}\left[\left(1+\frac{X}{K}\right)^K-\left(1+\frac{X}{K}+Y_K\right)^K\right]=0. \]
By expanding $\left(1+\frac{X}{K}+Y_K\right)^K$, we get
\[ \left\|\left(1+\frac{X}{K}\right)^K-\left(1+\frac{X}{K}+Y_K\right)^K \right\|\le\sum_{n=1}^K\left\|1+\frac{X}{K}\right\|^{K-n}\|Y_K\|^n\frac{K(K-1)\cdots(K-n+1)}{n!}. \]
The right hand side is bounded above by
\[ \left(1+\frac{\|X\|}{K}\right)^K\sum_{n=1}^K\frac{(K\|Y_K\|)^n}{n!}\le\left(1+\frac{\|X\|}{K}\right)^K(e^{K\|Y_K\|}-1). \]
Since $K\|Y_K\|\to0$ and $\left(1+\frac{\|X\|}{K}\right)^K\to e^{\|X\|}$ as $K\to\infty$, the right hand side of the inequality tends to $0$ as $K\to\infty$.
\end{proof}

\begin{proof}[The proof of equation (1)]
The proof consists of the following two steps:
\begin{itemize}
\item Show $\displaystyle\lim_{M\to\infty}\left(U_\frac{1}{M}\right)^{\tau M^2}=e^{\tau X}$ by using Lemma 3.2 for the case $K=M^2$.
\item Show $\displaystyle\lim_{M\to\infty}\left\|\left(U_{\frac{1}{M}}\right)^{[\tau M^2]-1}-\left(U_\frac{1}{M}\right)^{\tau M^2}\right\|=0$.
\end{itemize}
Since
\[ U_\frac{1}{M}=I+\frac{X}{M^2}+ \cdots,\]
in order to apply Lemma 3.2 for the case $K=M^2$, we have to show 
\[ M^2\left\|U_\frac{1}{M}-I-\frac{X}{M^2}\right\|\to0\ (M\to\infty). \]

Putting $\delta(A)=[H,A]$, then by the inequality $\|\delta(A)\|\le2\|H\|\|A\|$ we have 
\begin{eqnarray*}
\left|\omega_y(e^{itH}P_xe^{-itH})-\delta_{xy}+\frac{t^2}{2}\omega_y([H,[H,P_x]])\right|&=&\left|\sum_{n=3}^\infty\omega_y\left(\frac{t^n\delta^n}{n!}(P_x)\right)\right| \\
&\le&|t|^3\sum_{n=3}^\infty\frac{(2\|H\|)^n}{n!} \\
&\le&|t|^3e^{2\|H\|}
\end{eqnarray*}
for $|t|<1$. Thus, 
\[ M^2\left\|U_{\frac{1}{M}}-I-\frac{X}{M^2}\right\|\le \frac{2^N}{M}e^{2\|H\|}\to0\ (M\to\infty), \]
and by Lemma 3.2 we obtain
\[ \lim_{M\to\infty}\left(U_\frac{1}{M}\right)^{\tau M^2}=\left(\lim_{M\to\infty}\left(U_\frac{1}{M}\right)^{M^2}\right)^\tau=e^{\tau X}. \]
Next, we estimate the difference between $\left(U_\frac{1}{M}\right)^{[\tau M^2]-1}$ and $\left(U_\frac{1}{M}\right)^{\tau M^2}$. Denote $Y_M = U_\frac{1}{M}-I-\frac{X}{M^2}$, then
\begin{eqnarray*}
&&\left\|\left(I+\frac{X}{M^2}+Y_M\right)^{[\tau M^2]-1}-\left(I+\frac{X}{M^2}+Y_M\right)^{\tau M^2}\right\| \\
&\le&\left(1+\frac{\|X\|}{M^2}+\|Y_M\|\right)^{[\tau M^2]-1}\left\|I-\left(I+\frac{X}{M^2}+Y_M\right)^{1+\tau M^2-[\tau M^2]}\right\| .
\end{eqnarray*}
The first factor of the right hand side tends to $e^{\tau\|X\|}$ as $M\to\infty$. Let us consider the second factor.
\begin{eqnarray*}
\left\|I-\left(I+\frac{X}{M^2}+Y_M\right)^{1+\tau M^2-[\tau M^2]}\right\|&\le&\left\|\frac{X}{M^2}+Y_M\right\| \\
&&+ \left\|I+\frac{X}{M^2}+Y_M\right\|\left\|I-\left(I+\frac{X}{M^2}+Y_M\right)^{\tau M^2-[\tau M^2]}\right\|.
\end{eqnarray*}
Setting $A_M=\frac{1}{M^2}X+Y_M$ and $a=\tau M^2-[\tau M^2]$, then $0\le a<1$. And since $\|A_M\|\to0$ as $M\to\infty$, $\|A_M\|<1$ for large $M$. By 
\[ (I+A_M)^a=\sum_{n=0}^\infty\left(
\begin{array}{c}
a \\
n
\end{array}\right)
A_M^n, \]
we have 
\[ \|I-(I+A_M)^a\|\le\sum_{n=1}^\infty\left|\left(
\begin{array}{c}
a \\
n
\end{array}\right)\right|
\|A_M\|^n,\]
where 
\[ \left(
\begin{array}{c}
a \\
n
\end{array}\right)
=\frac{a(a-1)\cdots(a-n+1)}{n!} .\]
Since $0\le a<1$, $(-1)^n\left(
\begin{array}{c}
a \\
n
\end{array}\right)\le0$. Thus we obtain
\[ \sum_{n=1}^\infty\left|\left(
\begin{array}{c}
a \\
n
\end{array}\right)\right|
\|A_M\|^n=-\sum_{n=0}^\infty\left(
\begin{array}{c}
a \\
n
\end{array}\right)(-\|A_M\|)^n+1=1-(1-\|A_M\|)^a\le\|A_M\| \]
and this goes to $0$ as $M\to\infty$. Combining the above discussions, we get the conclusion
\[ \lim_{M\to\infty}\left(U_{\frac{1}{M}}\right)^{[\tau M^2]-1}=\lim_{M\to\infty}\left(U_\frac{1}{M}\right)^{\tau M^2}=e^{\tau X}. \]
\end{proof}

Using the above discussions, we obtain the limit
\[ q_\tau(x)\equiv \left(e^{X\tau}p_0^0\right)(x)=\lim_{M\to\infty}p_{\tau M}^{\frac{1}{M}}(x) \]
and it turns out that this is the solution of the following equations
\[ \frac{d}{d\tau}q_\tau=Xq_\tau, \]
\[ q_0(x)=\rho(P_x) .\]
$q_\tau$ represents the distribution of the configuration after the large time repeated measurement. The next question is from what stochastic process is this distribution obtained? Let us evaluate the detail of $X$. Recall that the $(x,y)$-entry of $X$ is $-\frac{1}{2}\omega_y([H,[H,P_x]])$. First, in order to obtain $[H,P_x]$, let us calculate $[a_n^*a_{n+1},P_x]$ and $[a_{n+1}^*a_n,P_x]$.
\begin{eqnarray*}
[a_n^*a_{n+1},P_x]&=&\delta_{x_n0}\delta_{x_{n+1}1}a_n^*a_{n+1}\prod_{m\neq n,n+1}^ NA_m^{x_m}\\
&&-\delta_{x_n1}\delta_{x_{n+1}0}\prod_{m\neq n,n+1}^ NA_m^{x_m}a_n^*a_{n+1} \\
&=&(\delta_{x_n0}\delta_{x_{n+1}1}-\delta_{x_n1}\delta_{x_{n+1}0})a_n^*a_{n+1}P_x^{n,n+1},
\end{eqnarray*}
where $P_x^{n,n+1}=\prod_{m\neq n,n+1}^ NA_m^{x_m}$. Similarly we have 
\[ [a_{n+1}^*a_n,P_x]=(\delta_{x_n1}\delta_{x_{n+1}0}-\delta_{x_n0}\delta_{x_{n+1}1})a_{n+1}^*a_nP_x^{n,n+1}. \]
Combining the above equations, we obtain
\[ [H,P_x]=-\frac{1}{2}\sum_{n=1}^N\left[ (\delta_{x_n0}\delta_{x_{n+1}1}-\delta_{x_n1}\delta_{x_{n+1}0})a_n^*a_{n+1}P_x^{n,n+1}+(\delta_{x_n1}\delta_{x_{n+1}0}-\delta_{x_n0}\delta_{x_{n+1}1})a_{n+1}^*a_nP_x^{n,n+1}\right] .\]
By the simple calculation, we have
\[ \omega_y([a_n^*a_{n+1},[a_m^*a_{m+1},P_x]])=0 ,\]
\[ \omega_y([a_n^*a_{n+1},[a_{m+1}^*a_m,P_x]])=\delta_{nm}(A_n^1A_{n+1}^0P_x^{n,n+1}-A_n^0A_{n+1}^1P_x^{n,n+1}) ,\]
\[ \omega_y([a_{n+1}^*a_n,[a_m^*a_{m+1},P_x]])=\delta_{nm}(A_n^0A_{n+1}^1P_x^{n,n+1}-A_n^1A_{n+1}^0P_x^{n,n+1} ),\]
\[ \omega_y([a_{n+1}^*a_n,[a_{m+1}^*a_m,P_x]])=0. \]
Thus, finally we get
\begin{eqnarray*}
-\frac{1}{2}\omega_y([H,[H,P_x]])&=&-\frac{1}{8}\sum_{n=1}^N[ (\delta_{x_n0}\delta_{x_{n+1}1}-\delta_{x_n1}\delta_{x_{n+1}0})\omega_y(A_n^0A_{n+1}^1P_x^{n,n+1}-A_n^1A_{n+1}^0P_x^{n,n+1})  \\
&&+(\delta_{x_n1}\delta_{x_{n+1}0}-\delta_{x_n0}\delta_{x_{n+1}1})\omega_y(A_n^1A_{n+1}^0P_x^{n,n+1}-A_n^0A_{n+1}^1P_x^{n,n+1}) ] \\
&=&-\frac{1}{4}\sum_{n=1}^N[(\delta_{x_n0}\delta_{x_{n+1}1}-\delta_{x_n1}\delta_{x_{n+1}0})\omega_y(A_n^0A_{n+1}^1P_x^{n,n+1}) \\
&& +(\delta_{x_n1}\delta_{x_{n+1}0}-\delta_{x_n0}\delta_{x_{n+1}1})\omega_y(A_n^1A_{n+1}^0P_x^{n,n+1})].
\end{eqnarray*}

When one considers the time evolution of the observables instead of distribution (Heisenberg picture), the generator is the transpose $X^T$ of $X$. The action of $X^T$ to the observable $f\colon\{0,1\}^N\to\mathbb{R}$ is 
\begin{eqnarray*}
 (X^Tf)(y)&=&-\frac{1}{2}\sum_{x\in\{0,1\}^N}\omega_y([H,[H,P_x]])f(x) \\
&=&-\frac{1}{4}\sum_{n=1}^N\left(1_{\{y_n=1,y_{n+1}=0\}}(f(y^{n\leftrightarrow n+1})-f(y))+1_{\{y_n=0,y_{n+1}=1\}}(f(y^{n\leftrightarrow n+1})-f(y)) \right),
\end{eqnarray*}
where for $y\in\{0,1\}^N$, $y^{n\leftrightarrow n+1}$ represents the configuration that exchanges the values at $n$ and $n+1$. $1_{\{y_n=1,y_{n+1}=0\}}$ is $1$ if the condition in $\{\}$ is satisfied and $0$ otherwise.

The stochastic process with such a generator is called symmetric simple exclusion process (SSEP). Theorem 3.1 is proved for the case that Hamiltonian does not include potential and interaction terms.

Before considering the case with potential and interaction terms, we would like to mention the importance of SSEP in (non-equilibrium) statistical physics. SSEP is a special case of a more general model, asymmetric simple exclusion process (ASEP) \cite{spitzer1970interaction, liggett1985interacting}, which is a solvable model of interacting particle systems. Its dynamics and stationary state are well investigated. Moreover, as mentioned in the introduction, it is known that the diffusion equation, $\frac{\partial}{\partial t}\rho(t,x)=D\frac{\partial^2}{\partial x^2}\rho(t,x)$, is obtained from SSEP by the hydrodynamic limit \cite{kipnis1998scaling, guo1988}.

In the following let us consider the Hamiltonian including the potential $\sum_{n=1}^Nv(n)a_n^*a_n$ and the interaction $\lambda\sum_{n=1}^N a_n^*a_na_{n+1}^*a_{n+1}$, and complete the proof of Theorem 3.1. Since these terms commute with $P_x$, they do not change $[H,P_x]$. Let us consider the contribution to $[H,[H,P_x]]$. Calculating the terms which do not become $0$, from the potential term we have 
\[ [a^*_na_n,a_n^*a_{n+1}P_x^{n,n+1}]=a_n^*a_{n+1}P_x^{n,n+1} ,\]
\[ [a^*_{n+1}a_{n+1},a_n^*a_{n+1}P_x^{n,n+1}]=-a_n^*a_{n+1}P_x^{n,n+1}, \]
\[ [a^*_na_n,a_{n+1}^*a_nP_x^{n,n+1}]=-a_{n+1}^*a_nP_x^{n,n+1}, \]
\[ [a^*_{n+1}a_{n+1},a_{n+1}^*a_nP_x^{n,n+1}]=a_{n+1}^*a_nP_x^{n,n+1}. \]
And from the interaction term, we have
\[ [a^*_{n+1}a_{n+1}a^*_{n+2}a_{n+2},a_n^*a_{n+1}P_x^{n,n+1}]=-A_{n+2}^1a_n^*a_{n+1}P_x^{n,n+1} ,\]
\[ [a^*_{n+1}a_{n+1}a^*_{n+2}a_{n+2},a_{n+1}^*a_nP_x^{n,n+1}]=A_{n+2}^1,a_{n+1}^*a_nP_x^{n,n+1}, \]
\[ [a^*_{n-1}a_{n-1}a^*_na_n,a_n^*a_{n+1}P_x^{n,n+1}]=A_{n-1}^1a_n^*a_{n+1}P_x^{n,n+1}, \]
\[ [a^*_{n-1}a_{n-1}a^*_na_n,a_{n+1}^*a_nP_x^{n,n+1}]=-A_{n-1}^1a_{n+1}^*a_nP_x^{n,n+1}. \]
The expectation values of these terms with respect to the state $\omega_y$ are $0$. This is due to the relation $\omega_y(A)=\mathrm{Tr}P_yAP_y$ and the fact that they are $0$ if multiplied $P_y$ from both side. Therefore even if one considers the Hamiltonian including the potential and the interaction
\[ H=\sum_{n=1}^N\left[-\frac{1}{2}(a_n^*a_{n+1}+a_{n+1}^*a_n)+v(n)a_n^*a_n+\lambda a^*_{n}a_{n}a^*_{n+1}a_{n+1}\right], \]
the stochastic process obtained by the large time repeated measurements of configuration is not changed. This completes the proof of Theorem 3.1.

Of course, if the measurement is not performed, the property of the transport of particles is influenced by the potential and the interaction. It is well-known that when the potential is periodic the system shows the ballistic transport (the current is independent of the system size) and for random potentials the system shows the localization (Anderson localization). But our result shows that by performing the long time repeated measurements, the transport of the particles is described by the same stochastic process (SSEP) independent of the potential and the interaction. This fact concludes that the effect of measurement sometimes suppresses the transport (comparing to the ballistic case) and sometimes induces the transport (comparing to the localization case). 

Independence of the potential implies that even if the electric field is induced, the particles do not flow in the specific direction. Since the stochastic process is symmetric, some particles moves against the electric field. This means one can extract work from the system only by performing the measurement.

\section{Discussion and outlook}
In this paper, we considered the large time repeated measurements of configuration of particles, and showed that the classical stochastic process (SSEP) is obtained. From this stochastic process, diffusion equation emerges by the hydrodynamic limit. Although we dealt with only one-dimensional periodic lattice, our result is easily extended to any dimension and the lattice with boundary or infinite lattice. One of the key points of our result is that the effect of continuous measurement makes the way of particle diffusion universal. Our result suggests that in order to explain the universal behavior of diffusion in macroscopic world as seen in the diffusion equation from the quantum mechanical dynamics, disturbance from the environment would be necessary. But, projection measurement is the very strong disturbance. How can we obtain the diffusion equation from more physically natural dissipative quantum many body systems? This is our future work. 

In the present paper, we only use the fact that the measurement is described by 1-rank PVM to prove the former part of Theorem 3.1: $p^{\frac{1}{M}}_{\tau M}$ converges to a distribution $q_\tau$ which is described by the equation $\frac{d}{d\tau}q_\tau =Xq_\tau$. Thus, a part of our main result can be applied to general systems.
\vspace{6pt} 

%\reftitle{References}

\bibliography{stochastic}
\bibliographystyle{unsrt}

\end{document}